\newtheorem{theorem}{Theorem}[section]
\newtheorem{lemma}[theorem]{Lemma}
\newtheorem{corollary}[theorem]{Corollary}
\newcommand{\qed}{\rule{7pt}{7pt}}
\newenvironment{proof}{\noindent {\it Proof\/}:}{$\qed$ \medskip}
\newcommand{\Ptime}{\mbox{P}}
\newcommand{\NP}{\mbox{NP}}
\DeclareMathOperator*{\argmax}{\arg\!\max}
\begin{document}

\title{Easy Capacitated Facility Location Problems,\\ with Connections to Lot-Sizing}
\author{Alice Paul\thanks{Address: Olin College, 1000 Olin Way, Needham, MA 02492, \tt{apaul@olin.edu}} \and
		David P.\ Williamson\thanks{Address: School of Operations Research and Information Engineering, Cornell
	University, Ithaca, NY 14853, \tt{davidpwilliamson@cornell.edu}}}


\date{}
\maketitle

\begin{abstract}
	In this note, we consider the capacitated facility location problem when the transportation costs of the instance satisfy the Monge property.  We show that a straightforward dynamic program finds the optimal solution when the demands are polynomially bounded.   When demands are not polynomially bounded, we give a fully polynomial-time approximation scheme by adapting an algorithm and analysis of Van Hoesel and Wagelmans \cite{VanHoeselW01}.
\end{abstract}

\section{Introduction}

In the {\em capacitated facility location problem}, we are given a set $D$ of clients, a set $F$ of facilities, a cost $f_i \geq 0$ for each $i \in F$ for opening facility $i$, a capacity $U_i$ for each facility $i \in F$, a demand $d_j > 0$ for each client $j \in D$, and a cost $c_{ij} \geq 0$ for transporting a single unit of demand from client $j \in D$ to a facility $i \in F$.  We allow $c_{ij} = \infty$, meaning that demand cannot be shipped from client $j$ to facility $i$.  We assume that all finite numbers in the input are integers.  The goal is to find a set $F'$ of facilities to open plus an assignment $x^*_{ij} \in [0,1]$ of demand to open facilities such that $x^*_{ij} > 0$ only when $i \in F'$, $\sum_{j \in D} x^*_{ij}d_j \leq U_i$ for all $i \in F'$, $\sum_{i \in F'} x^*_{ij} = 1$ for all $j \in D$, and we minimize the cost of opening facilities and transporting demand, $$\sum_{i \in F'} f_i + \sum_{i \in F', j \in D} c_{ij} d_j x^*_{ij}.$$  In what follows, we will let $m = |F|$ denote the number of facilities and $n = |D|$ denote the number of clients.  We say that the capacities are {\em uniform} if $U_i = U$ for all $i \in F$.  We say that the demands are {\em polynomially bounded} if $\sum_{j \in D} d_j$ can be bounded by some polynomial in the input size of the problem.

We consider exact polynomial-time algorithms and approximation algorithms for the problem.  An $\alpha$-approximation algorithm is an algorithm that runs in polynomial time and returns a solution of cost at most $\alpha$ times the cost of an optimal solution.  The value $\alpha$ is called the approximation ratio or the performance guarantee of the algorithm.  A polynomial-time approximation scheme (PTAS) is a family of algorithms $A_\epsilon$ parameterized by $\epsilon > 0$, such that each $A_\epsilon$ is a $(1+\epsilon)$-approximation algorithm.  A fully polynomial-time approximation scheme (FPTAS) is a PTAS in which the running time of each $A_{\epsilon}$ is polynomial in $1/\epsilon$.

In its full generality, the capacitated facility location problem is strongly NP-hard, and includes the set cover problem as a special case \cite{Hochbaum82}, and thus no better than an $O(\ln n)$-approximation algorithm is possible unless $\Ptime = \NP$ \cite{LundY94,BellareGLR93,Feige98, MoshkovitzR10, Moshkovitz15, DinurS14}.  Researchers have considered the special case in which transportation costs are {\em metric}, in the following sense: for any clients $j,k \in D$ and facilities $h, i \in F$, $$c_{ij} \leq c_{hj} + c_{hk} + c_{ik}.$$  In this case, a simple local search algorithm initially given by Korupolu, Plaxton, and Rajaraman \cite{KorupoluPR00} has been shown by Aggarwal, Louis, Bansal, Garg, Gupta, Gupta, and Jain \cite{AggarwalLBGGGJ13} to be a $(3+\epsilon)$-approximation algorithm in the uniform case; a less simple local search algorithm is a $(5 + \epsilon)$-approximation algorithm for the non-uniform case \cite{BansalGG12}.  An, Singh, and Svensson \cite{AnSS17} have given a linear programming relaxation of the problem such that an LP rounding algorithm is a 316-approximation algorithm for the problem.  A long line of research has given approximation algorithms with small performance guarantees in the case of the metric {\em uncapacitated} facility location problem, in which $U_i = \infty$ for all facilities $i$; the best currently known algorithm is a 1.488-approximation algorithm due to Li \cite{Li11}.  It is known that the metric uncapacitated facility location problem cannot have an $\alpha$-approximation algorithm for $\alpha < 1.463$ unless $\Ptime = \NP$ \cite{GuhaK99}.

Exact algorithms and approximation algorithms have been given in some other special cases motivated by problems in inventory management.  In {\em single-item capacitated lot-sizing problem}, each client $j$ represents a point in time $t(j) \in [1,T]$, with a demand $d_j$ for a given item.  Each facility $i \in F$ represents a point in time $t(i) \in [1,T]$ at which an order can be placed for up to $U_i$ units of the given item at a cost of $f_i$; a client can only receive items from a facility at the same or an earlier point in time, so that $c_{ij} = \infty$ if $t(i) > t(j)$.  The transportation cost $c_{ij}$ is in this case a per-unit {\em holding cost} for holding a unit of the given item in inventory from time $t(i)$ to $t(j)$; holding costs for items ordered at time $t(i)$ are assumed to increase as $t(j)$ increases. A typical assumption is that holding costs are {\em linear}: that for each unit of time $[t,t+1)$ there is a holding cost $h_t$, and the cost of holding an item in inventory from time $t$ to $t' > t$ is $\sum_{k=t}^{t'-1} h_k$.  So in the linear case $c_{ij} = \sum_{t=t(i)}^{t(j)-1} h_t$.  In the {\em multi-item} variant of the problem, there are different types of items and a single point in time $t$ can have demands for each type of item, and a different holding cost for each type of item.  The capacity $U_i$ of a facility is the total number of items (summed across all item types) that can be ordered at time $t(i)$.  The different item types can be modelled by creating a client $j$ with $t(j) = t$ for each different type of item that has a demand at time $t$.  In the multi-item case, the holding costs can be linear and identical for each item, linear and different for each item type $j$ (so that there is a holding cost $h_{jt}$ for each item $j$ and time period $t$), or simply {\em increasing}: all we know for a given demand for a given item $j$ at time $t(j)$ is that the cost $c_{ij}$ is nondecreasing as $t(i)$ decreases.

Florian, Lenstra, and Rinnooy Kan \cite{FlorianLRK80} show that the single-item capacitated lot-sizing problem is weakly NP-hard even if all holding costs are zero by reduction from the subset sum problem (see Bitran and Yannasse \cite{BitranY82} for further NP-hardness results).
Van Hoesel and Wagelmans \cite{VanHoeselW01} give an FPTAS for the single-item capacitated lot-sizing problem with increasing holding costs.  Anily and Tzur \cite{AnilyT05} give a dynamic program for the multiple item case in which the holding costs are linear and identical for all items, and ordering/facility costs are identical.  Anily, Tzur, and Wolsey \cite{AnilyTW09} give an exact linear programming formulation for the multi-item case in which holding costs are linear but not identical; however, there is a priority on items so that the holding costs $h_{1t} \geq h_{2t} \geq \cdots$ for all time periods $t$.  Additionally, they assume that it is possible to place multiple orders at time $t$, each of capacity $U_t$ and cost $f_t$.  Carnes and Shmoys \cite{CarnesS15} give a primal-dual 2-approximation algorithm for the multi-item case in which holding costs are identical and linear.  Even, Levi, Rawitz, Schieber, Shahar, and Sviridenko \cite{EvenLRSSS08} give another dynamic program for the multi-item case with special assumptions on the holding costs.  Levi, Lodi, and Sviridenko \cite{LeviLS08} give an LP-rounding 2-approximation algorithm for the multi-item case in which the capacities are uniform and holding costs are increasing.  Li \cite{Li17} gives an LP-rounding 10-approximation algorithm for the multi-item case with the same holding costs when capacities are non-uniform.

In this paper, we will consider another special case of transportation cost for the capacitated facility location problem, and show that it subsumes the multiple-item lot-sizing problem with linear and identical holding costs.  In particular, we consider the case of transportation costs that obey the {\em Monge property}: assuming that $F=\{1,\ldots,m\}$, $D = \{1,\ldots,n\}$, the transportation costs $c_{ij}  \in \Re^{\geq 0} \cup \{\infty\}$ obey the Monge property if
\begin{equation} \label{eq:monge}
c_{hj} + c_{ik} \leq c_{hk} + c_{ij} \qquad \mbox{ for } 1 \leq h < i \leq m, 1 \leq j < k \leq n,
\end{equation} where $a + \infty = \infty$ for any $a$.  Hoffman \cite{Hoffman63} identified the Monge property of transportation costs as allowing for simple greedy algorithms to find optimal solutions in many cases.  Burkard, Klinz, and Rudolph \cite{BurkardKR96} give a nice survey of the Monge property and its various applications. 

Crucially, for our purposes, if transportation costs obey the Monge property, then the following {\em transportation problem} can be solved via a greedy algorithm: given supplies $s_i \geq 0$ for $i = 1,\ldots,m$ and demands $d_j \geq 0$ for $j = 1,\ldots,n$ such that $\sum_{i=1}^m s_i = \sum_{j=1}^n d_j$, we wish to find a transshipment $x_{ij} \geq 0$ for $i = 1, \ldots, m$ and $j = 1,\ldots,n$ that minimizes $\sum_{i=1}^m\sum_{j=1}^n c_{ij} x_{ij}$ such that $\sum_{i=1}^m x_{ij} = d_j$ for each $j=1,\ldots,n$ and $\sum_{j=1}^n x_{ij} = s_i$ for each $i=1,\ldots,m$.  In particular, Algorithm \ref{alg:monge} finds an optimal solution to the problem when the transportation costs have the Monge property; the algorithm visits the clients and facilities in increasing order, and greedily sends as much demand as possible from the current client to the current facility.

\begin{algorithm}[t]
$i \gets 1$\;
$j \gets 1$\;
$x \gets 0$\;
\While{$i \leq m$ and $j \leq n$}{
	\eIf{$s_i \geq d_j$}{
		$x_{ij} \gets d_j$\;
		$j \gets j + 1$\;
		$s_i \gets s_i - d_j$\;		
	}{
		$x_{ij} \gets s_i$\;
		$i \gets i + 1$\;
		$d_j \gets d_j - s_i$\;
	}
}
\Return $x$\;
	\caption{An algorithm for finding a solution to a transportation problem with Monge costs.}
	\label{alg:monge}
\end{algorithm}

In the case of the capacitated single-item lot-sizing problem with linear holding costs, it is easy to show a reduction to Monge costs.  Our ordering on orders (facilities) is in order of time, as is our ordering on demands (clients).  For simplicity, we reindex the orders and demands to have an index corresponding to their time period, so that the cost of an order at time $t$ is $f_t$, the demand at time $t$ is $d_t$, the capacity of an order that can be placed at time $t$ is $U_t$, the holding cost per unit is $h_t$ for the time period $[t,t+1)$, and the cost of holding a unit of an item from time $t$ to time $t'$ is denoted $c_{t,t'} = \sum_{k=t}^{t'-1} h_k$.  Note that the cost for meeting a demand $d_t$ from an order at time $t' > t$ is infinite, so that $c_{t',t} = \infty$.  To verify the Monge inequality, we need to pick four time periods, call them $p < q$ (for the orders) and $s < t$ (for the demands) and verify that
$$c_{ps} + c_{qt} \leq c_{pt} + c_{qs}.$$  If $s<p$ then $c_{ps} = \infty$, but then $s < p < q$ and $c_{qs} = \infty$.  If $t < q$ then $c_{qt} = \infty$, but then $s < t < q$ so that $c_{qs} = \infty$.  So we assume that $p \leq s < t$ and $q \leq t$, so that 
$$c_{ps} + c_{qt} = \sum_{k=p}^{s-1} h_k + \sum_{k=q}^{t-1} h_k \leq c_{pt} + c_{qs} =  \left\{
\begin{array}{ll}
\infty & \mbox{if } s < q \\
\sum_{k=p}^{t-1} h_k + \sum_{k=q}^{s-1} h_k & \mbox{if } q \leq s.
\end{array}
\right.$$ It is easy to give a similar reduction for capacitated multi-item lot-sizing with linear and identical holding costs.

We also observe that the well-studied single-demand capacitated facility location problem \cite{CarrFLP00,CarnesS15,CheungW17}, in which $|D|=1$ is also a capacitated facility location problem whose transportation costs obey the Monge property since (\ref{eq:monge}) holds trivially.

Our results are as follows.  We first show in Section \ref{sec:pb} that a straightforward dynamic program gives a polynomial-time algorithm for the capacitated facility problem with costs obeying the Monge property in which demands are polynomially bounded.  The main issue to overcome relative to the easy algorithm above for the underlying transportation problem is that we do not know which facilities are opened, and it may be the case that the total open capacity exceeds the total demand.  However, by using dynamic programming we can guess the open facilities, and we can guess how much demand is served by each open facility, and this guessing can be done in polynomial time as long as the total demand is polynomially bounded.   In the case that demands are not polynomially bounded, we extend the algorithm of Van Hoesel and Wagelmans \cite{VanHoeselW01} to give an FPTAS for the capacitated facility location problem with costs obeying the Monge property in Section \ref{sec:npb}.   We further discuss some extensions of our dynamic program to cases in which $c_{ij}$ is finite only for a fixed range of client $j$.

By the reduction above, we immediately get a polynomial-time algorithm for capacitated lot-sizing problems with linear holding costs and polynomially bounded demands, and the FPTAS of Van Hoesel and Wagelmans in the case that demands are not polynomially bounded.  We obtain the same for the single-demand capacitiated facility location problem.    Given that it is NP-hard to approximate even the uncapacitated facility location in the metric case to within a factor less than 1.463, we see that costs having the Monge property give rise to a particularly easy case of the capacitated facility location problem.

\section{The Dynamic Program for Polynomially-Bounded Demands}
\label{sec:pb}

In this section, we show that a simple dynamic program gives the optimal solution to the capacitated facility location problem when the transportation costs obey the Monge property.  The basic idea is quite simple: for each facility $i \in F$, we decide whether to open facility $i$, and, if so, how much demand to serve from it, up to its capacity $U_i$ or the total demand remaining.  Because of the Monge property, we know that we can assign the demand from clients to the facility in a greedy manner, and it is easy to state what the remaining unassigned demand will be.  We assume that clients and facilities are indexed by $1,\ldots,n$ and $1,\ldots,m $ (respectively) so that the Monge equation (\ref{eq:monge}) holds.

In particular, we will use the function $C(i,j,d)$ for the dynamic program, in which $d \leq d_j$.  The function $C(i,j,d)$ is the cost of the optimal solution to the capacitated facility location problem having only facilities $i$ and higher, only clients $j$ and higher, in which there are only $d$ units of demand to be served at client $j$. If we reach the end of the list of facilities, and have not served all the demand, then the cost is infinite, so that $C(m+1,j,d) = \infty$ if $d + \sum_{k =j+1}^n d_k > 0$.  If we reach the end of the clients, and there is no more demand to be served, then the cost is zero, so that $C(i,n+1,d) = 0$, and we assume $d = d_{n+1} = 0$.  To compute the optimal solution, we wish to find the cost of $C(1,1,d_1)$.

\newcommand{\NC}{\mbox{\sc NC}}
\newcommand{\DR}{\mbox{\sc DR}}
\newcommand{\TR}{\mbox{\sc TC}}

In order to state the recurrence relation, it will be useful to have some helper functions.  Given that we can assume that demand will be served greedily, if facility $i$ will serve $u$ units of demand, and we need to serve $d$ units from client $j$, then these $u$ units will serve all clients from $j$ to client $\ell$, where $\ell$ is the first index such that $d + \sum_{k = j+1}^\ell d_k > u$ (note that if $d > u$, then $\ell = j$). Let $\NC(i,u,j,d)$ return the index $\ell$.  Additionally, we need to know how many units of demand of client $\ell$ remain to be served.  For example, if $d > u$ and $\ell = j$, then $d-u$ units of demand remain to be served at client $\ell = j$.  More generally, if $\ell$ is the smallest index such that $d + \sum_{k =j+1}^\ell d_k > u$, then $d_\ell - (u - (d + \sum_{k =j+1}^{\ell - 1}d_k))$ remain to be served at client $\ell$.  Let $\DR(i,u,j,d)$ denote this amount of demand that remains to be served at client $\ell$ after the $u$ units of demand have been served from facility $i$.  Finally, we let $\TR(i,u,j,d)$ be the transportation cost of serving $u$ units of demand from facility $i$ by greedily assigning demand to $i$, so that if $\ell = \NC(i,u,j,d)$ and $d' = \DR(i,u,j,d)$
$$\TR(i,u,j,d) = \left\{ \begin{array}{ll}
c_{ij} \cdot (d - d') & \mbox{if } \ell = j, \\
 c_{ij} \cdot d + \sum_{k = j+1}^{\ell - 1} c_{ik} \cdot d_k + c_{i\ell} \cdot (d_\ell - d') & \mbox{otherwise.}
 \end{array}\right.$$

We can now write the recurrence relation.  In it, for facility $i$, we optimize over whether $i$ remains unopened (in which case remaining demand gets served from facilities $i+1$ and higher), or, if $i$ is opened, how many units of demand are served from $i$.  The relation is below:
\begin{align}
C(i,j,d) & =  \min\left(C(i+1,j,d), \right.  \label{dp-pb} \\
	&   \left. f_i + \min_{u: 1 \leq u \leq \min(U_i, d + \sum_{k =j+1}^n d_j)}\left(\TR(i,u,j,d) + C(i+1,\NC(i,u,j,d), \DR(i,u,j,d)\right)\right) \nonumber 
\end{align}
We observe that as long as $u$ is always bounded by a polynomial in the input size, then the dynamic program itself can be executed in polynomial time.  Thus if the demands are polynomially bounded, $u$ is polynomial in the input size, and the algorithm runs in polynomial time.

It is now straightforward to prove that the dynamic program finds the optimal solution to the problem.

\begin{theorem}
The dynamic program given in (\ref{dp-pb}) computes the optimal cost of $C(i,j,d)$.
\end{theorem}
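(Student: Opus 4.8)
The plan is to prove, by induction on $m+1-i$, that the value $C(i,j,d)$ produced by recurrence (\ref{dp-pb}) together with the stated boundary conditions equals the optimal cost of the subproblem that has facilities $\{i,\ldots,m\}$, clients $\{j,\ldots,n\}$, $d$ units of demand remaining at client $j$, and $d_k$ units at each client $k>j$. This subproblem is again a capacitated facility location instance whose costs obey the Monge property, since (\ref{eq:monge}) is inherited by any set of consecutive rows and columns. The base case $i=m+1$ is immediate: if $d+\sum_{k=j+1}^n d_k>0$ the subproblem is infeasible and its cost is $\infty$, and otherwise (so $j=n+1$, $d=0$) the cost is $0$; the case $j=n+1$ is handled the same way, as all demand is already served.

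For the inductive step, fix $i\le m$. That the right-hand side of (\ref{dp-pb}) is at least the optimal cost of subproblem $(i,j,d)$ is the easy direction: leaving $i$ closed yields a feasible solution whose cost is that of $(i+1,j,d)$, i.e.\ $C(i+1,j,d)$ by induction; and for any admissible $u$, opening $i$ and greedily assigning its $u$ units of supply to the lowest-indexed remaining clients is feasible, contributes exactly $f_i+\TR(i,u,j,d)$ from facility $i$, and leaves precisely the subproblem $(i+1,\NC(i,u,j,d),\DR(i,u,j,d))$, whose optimal cost is $C(i+1,\NC(i,u,j,d),\DR(i,u,j,d))$ by induction. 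Here the definitions of $\NC$, $\DR$, and $\TR$ are exactly what is needed for these quantities to match.

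The substantive direction is to show that the optimal cost of $(i,j,d)$ is at least the right-hand side of (\ref{dp-pb}), by exhibiting an optimal solution of a restricted ``staircase'' form that the recurrence is guaranteed to consider. First, since all finite data are integral, for any fixed set of open facilities with total capacity at least the total remaining demand the induced min-cost transportation problem is a min-cost flow problem with integral data, hence has an optimal solution in which the amount of demand of each client served by each facility is a nonnegative integer; optimizing also over the set of open facilities, there is therefore an integral optimal solution to $(i,j,d)$. Next apply a Monge exchange argument: if, in such a solution, a facility $h$ serves a positive amount of client $k$'s demand while a facility $i'>h$ serves a positive amount of client $k'$'s demand with $k'<k$, then rerouting $\delta$ (the smaller of the two amounts) so that $h$ serves client $k'$ and $i'$ serves client $k$ preserves all demand, supply, and capacity constraints and, by (\ref{eq:monge}), does not increase the cost; as $\delta$ is an integer, integrality is preserved. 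After finitely many such exchanges (integrality of the $\delta$'s bounds their number via a standard potential argument) no ``crossings'' remain, so the open facilities, listed in increasing index order, serve consecutive blocks of clients in increasing index order.

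To finish, consider such a staircase optimum for $(i,j,d)$, which uses only facilities $\ge i$. Either facility $i$ serves no demand --- and then, since a facility serving nothing may as well be closed, the rest of the solution is feasible for $(i+1,j,d)$, so the cost is at least $C(i+1,j,d)$ --- or facility $i$ is open and serves some integer amount $u$ with $1\le u\le \min\!\big(U_i,\,d+\sum_{k=j+1}^n d_k\big)$. In the latter case the staircase property forces facility $i$ to serve exactly the lowest $u$ units of remaining demand, so its total contribution is $f_i+\TR(i,u,j,d)$, the index $\NC(i,u,j,d)$ and residual $\DR(i,u,j,d)$ describe where it stops, the remainder of the solution is feasible for $(i+1,\NC(i,u,j,d),\DR(i,u,j,d))$, and the cost is at least $f_i+\TR(i,u,j,d)+C(i+1,\NC(i,u,j,d),\DR(i,u,j,d))$. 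In either case the optimal cost of $(i,j,d)$ is at least the right-hand side of (\ref{dp-pb}), which completes the induction and the proof. I expect the structural claim --- that a Monge-optimal, integral, non-crossing solution exists --- to be the main point; the rest is bookkeeping to confirm that $\NC$, $\DR$, and $\TR$ faithfully encode the greedy assignment.
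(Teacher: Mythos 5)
Your proof is correct and follows essentially the same route as the paper: induction on the facility index together with a Monge exchange argument showing that the $u$ units served by facility $i$ may be assumed to come greedily from the lowest-indexed remaining clients. You are somewhat more careful than the paper about points it leaves implicit --- integrality of an optimal assignment and termination of the exchange process --- but the core argument is the same.
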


\begin{proof}
We show that one of the options considered in the dynamic program corresponds to an optimal solution for the problem instance for $C(i,j,d)$, and this is sufficient.  We assume by induction that we have the optimal cost for $C(i',j',d')$ for $i' > i$ or $j' > j$ or $d' < d$.  In a given optimal solution to the instance having only facilities $i$ and higher, only clients $j$ and higher, and only demand $d$ at client $j$, either facility $i$ is opened, or it is not.  If it is not opened, then the demand from clients $j$ and higher must be met by facilities $i+1$ and higher, and $C(i+1,j,d)$ gives the optimal cost to this problem. 

If $i$ is opened in the given optimal solution, then it serves some amount of demand $u$, where $u \geq 1$, $u \leq U_i$, and $u \leq d + \sum_{k =j+1}^n d_k$.  We now claim that the $u$ units of demand can be assigned greedily to the facility $i$.  To prove this claim, we fix an optimal solution that assigns as much of the $u$ units of demand to facility $i$ from clients of index as small as possible. Suppose that we run the greedy algorithm to assign $u$ units of demand to $i$, and $j' \geq j$ is the lowest indexed client such that the fixed optimal solution assigns fewer units of demand from $j'$ to $i$ than the greedy algorithm; instead the optimal solution assigns at least one unit of demand from $j'$ to a facility $h$ with $h > i$; clearly this facility must be open in the optimal solution.  Since we know that both the greedy algorithm and the optimal solution assign $u$ units of demand to facility $i$, if the optimal solution assigns fewer units of the demand of client $j'$ than the greedy algorithm to $i$, then there must be some other client $k > j'$ such that the optimal solution assigns at least a unit of demand from $k$ to $i$.  However, because we know that the Monge inequality (\ref{eq:monge}) holds (since $j' < k$ and $i < h$), we can reassign the optimal solution assignment of demand from $j'$ to $h$ and $k$ to $i$, to instead be from $j'$ to $i$ and $k$ to $h$ without increasing the cost.  Thus we have another optimal solution in which demand is assigned to $i$ from a lower indexed client than before, which contradicts our assumption about the optimal solution.  Thus we can assume that the $u$ units of demand is assigned greedily to facility $i$, and so the cost of assigning the $u$ units of demand, and the remaining problem is as given in the recurrence relation.
\end{proof}

\section{The Dynamic Program for Non-Polynomially Bounded Demands}
\label{sec:npb}

\subsection{The Dynamic Program}

In this section we give a dynamic program for the case of non-polynomial demands.  Our dynamic program follows ideas of Van Hoesel and Wagelmans \cite{VanHoeselW01} for the capacitated single-item lot-sizing problem.

Let $z^*$ be the cost of an optimal solution to the capacitated facility location problem with Monge costs. We first assume that we know a value $B \in \mathbb{Z}$ such that $z^* \leq B \leq m \cdot z^*$; we will show how to find such a $B$ in polynomial-time in Section \ref{sec:findB}. Given $B$, we present a non-polynomial-time dynamic program that finds the optimal solution to the facility location problem. We then show how to approximate this dynamic program to obtain an FPTAS. 

For $i \in F$ and $b \in \{1,2, \ldots, B\}$, our value function $V_{i}(b)$ will be the maximum demand we can meet by only opening facilities $i$ and higher, spending at most budget $b$, and fulfilling demand such that if we satisfy any demand for client $j$ then we satisfy all demand for clients $j' > j$. If we reach the end of our facilities, then we cannot fill any more demand. Thus $V_{m+1}(b) = 0$. To find the optimal solution cost, we simply want to find the smallest value $b$ such that $V_{1}(b) \geq \sum_{j} d_j$.  

Before we state the recurrence relation, we first define a helper function. As observed in the previous section, the Monge property allows us to assume that demand is filled from the last client forward. Thus if facilities of index larger than $i$ fulfill $d$ units of demand using budget $b' \leq b$, then we can find the remaining demand for all clients. Let $\bar{d}_j(i,d)$ denote this remaining demand. Further, this implies that $i$ spends the remaining budget $b-b'$ serving its demand. In particular, we can first subtract $f_i$ for opening $i$ and then assign $i$ to fulfill the remaining demand $\bar{d}_j(i,d)$ starting from $j = n$ until we run out of budget or capacity. Let $DM(i, d, b-b')$ denote the amount of demand met. 

We can now write the recurrence relation. 
\begin{equation}
\label{eq:dp}
V_{i}(b)  = \max_{0 \leq b' \leq b} \left(V_{i+1}(b') + DM(i, V_{i+1}(b'), b-b') \right).
\end{equation}

\begin{theorem}
	\label{thm:dp_opt}
	The dynamic program given in Equation~(\ref{eq:dp}) correctly computes $V_{i}(b)$. 
\end{theorem}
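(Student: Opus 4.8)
The plan is to follow the proof of the theorem in Section~\ref{sec:pb}, now arguing by induction on $m+1-i$. The base case is $V_{m+1}(b)=0$ for every $b$, which holds by definition since no demand can be met with no facilities. For the inductive step I fix $i$ and $b$, assume the recurrence correctly computes $V_{i+1}(b')$ for all $0\le b'\le b$, and prove that the right-hand side of~(\ref{eq:dp}) equals the true value $V_i(b)$ (the maximum demand meetable by opening only facilities $i$ and higher, spending at most $b$, and satisfying a suffix-closed set of client demand) by establishing the two inequalities separately.

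The bound ``right-hand side $\le V_i(b)$'' is the easy direction. For each $b'$, take the sub-solution on facilities $i+1,\dots,m$ meeting $V_{i+1}(b')$ units with budget at most $b'$ that the induction hypothesis guarantees; it is suffix-closed, leaving a remaining demand profile $\bar d_j(i,V_{i+1}(b'))$. Appending facility $i$ serving $DM(i,V_{i+1}(b'),b-b')$ further units greedily from client $n$ downward extends this to a feasible, still suffix-closed solution that uses budget at most $b$ and meets $V_{i+1}(b')+DM(i,V_{i+1}(b'),b-b')$ units; hence $V_i(b)$ is at least this quantity, and maximizing over $b'$ gives the bound.

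For ``right-hand side $\ge V_i(b)$'', fix a solution $S$ meeting $V_i(b)$ units with budget at most $b$ and satisfying the suffix-closed property, chosen so that the demand assigned to facility $i$ comes from clients of index as small as possible. Exactly the Monge exchange argument of Section~\ref{sec:pb} applies: if facility $i$ serves some of client $k$ while a facility $h>i$ serves some of client $j<k$, then~(\ref{eq:monge}) lets us reassign so that $i$ serves $j$ and $h$ serves $k$ without increasing cost, contradicting minimality. Hence in $S$ every client touched by facility $i$ has index at most every client touched by facilities $i+1,\dots,m$, so facilities $i+1,\dots,m$ serve a suffix $\{j_1,\dots,n\}$ of clients (only $j_1$ possibly partial), and the suffix-closed property then forces facility $i$ to serve precisely the greedy load on clients of index at most $j_1$, starting from the frontier $j_1$ and proceeding downward. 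Letting $b'$ and $d$ be the budget used and demand met by facilities $i+1,\dots,m$ in $S$, the induction hypothesis gives $d\le V_{i+1}(b')$, and the structure of $S$ gives $DM(i,d,b-b')\ge V_i(b)-d$.

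The one remaining step, which I expect to be the main obstacle, is to upgrade ``$d$'' to ``$V_{i+1}(b')$'': the recurrence optimizes over the budget split rather than over the demand met, and the split realized by $S$ need not be tight. I would record this as a monotonicity lemma --- for fixed $i$ and $\beta$, the map $y\mapsto y+DM(i,y,\beta)$ is nondecreasing on the relevant range of $y$ --- proved by a short modification argument: replacing facilities $i+1,\dots,m$ by a suffix-closed sub-solution meeting $V_{i+1}(b')$ units pushes the frontier down and simply hands the top part of facility $i$'s greedy load to those facilities without decreasing the total demand met, using only nonnegativity of the costs. Granting the lemma, $V_{i+1}(b')+DM(i,V_{i+1}(b'),b-b')\ge d+DM(i,d,b-b')\ge V_i(b)$, so maximizing over $b'$ completes the induction. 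The only other work is routine bookkeeping: verifying the frontier identities defining $\bar d_j(i,d)$ and checking that each manipulation preserves feasibility and the suffix-closed property.
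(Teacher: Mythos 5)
Your proof is correct and follows essentially the same route as the paper's: induction on $i$ combined with the Monge exchange argument showing that facility $i$ may be assumed to serve the lowest-indexed portion of the satisfied suffix, so that facilities $i+1,\dots,m$ form a sub-solution to which the inductive hypothesis applies. The one place you go beyond the paper is the monotonicity lemma (that $y \mapsto y + DM(i,y,\beta)$ is nondecreasing), which the paper's proof uses implicitly when it asserts the higher facilities can be taken to serve the full $V_{i+1}(b')$ units rather than the $d \le V_{i+1}(b')$ units they serve in the fixed optimal solution; your sketch of that lemma --- the gap of $V_{i+1}(b')-d$ units is handed to the higher facilities, and by nonnegativity of costs facility $i$ can still afford the remaining part of its old greedy load --- is sound and is exactly the detail needed to make the step rigorous.
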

\begin{proof}
	Recall that $V_{i}(b)$ is the maximum number of units of demand met when only facilities $i$ and higher are opened, all demand is satisfied right to left (from $j=n$ backward), and we spend at most budget $b$. The base case holds by construction. We assume by induction that the recurrence relationship correctly computes $V_{i'}(b')$ for all $i' > i$ and $b' \leq b$.  In a given optimal feasible solution that only opens facilities $i$ and higher, and spends budget at most $b$, the amount of budget used by facilities $i' \geq i$ is $b'$ for some $0 \leq b' \leq b$. 
	
	Suppose that the optimal solution assigns $i$ to serve a unit of demand from client $k$ and assigns $i' >i$ to serve client $k' < k$. Then, by the Monge property, swapping this assignment so that $i$ serves $k'$ and $i'$ serves $k$ cannot increase the total cost of the solution. Therefore, the solution remains feasible, and we can assume that if a facility $i' \geq i$ serves client $j$ then all clients of index larger than $j$ are served fully by facilities $i'$ and higher. Thus, by induction, facilities $i'$ for $i' \geq i$ serve $V_{i+1}(b')$ units of demand from right to left. Further, facility $i$ can then use the remaining budget $b'-b$ to serve as much demand as possible continuing from right to left. The amount served by $i$ is given by $DM(i,V_{i+1}(b'),b'-b)$. Overall, this shows that the maximum amount of demand satisfied is 
	\[  V_{i+1}(b') + DM(i, V_{i+1}(b'), b-b') . \]
	
	Taking the maximum over all feasible $b'$ yields the recurrence relationship.
\end{proof}

The problem with the value functions $V_{i}(b)$ is that there are a non-polynomial number of possible values for $b$. To get around this, we will fix an integer $K  = \max(1, \lceil \frac{\epsilon B}{m(m+1)} \rceil )$ and restrict the budgets to be multiples of $K$ in the set $\mathcal{B}_K =  \{0,K, 2K, \ldots, (\lceil B/K \rceil + m) K \}$. Let $\lceil x \rceil _K$ ($\lfloor x \rfloor_K$) return $x$ rounded up (down) to the nearest multiple of $K$. Then we modify the DP recurrence relationship using these restricted budgets. In particular, for $i \in F$, $j \in D$, and $b \in \mathcal{B}_K$, we let
\begin{equation}
\label{eq:dp_rounded}
\bar{V}_{i}(b)  = \max_{b' \in \mathcal{B}_K: 0 \leq b' \leq b} \left( \bar{V}_{i+1}(b') + DM(i, \bar{V}_{i+1}(b'), b-b') \right). 
\end{equation}
The base cases remain the same: $\bar{V}_{m+1}(b) = 0$. 

\begin{lemma}
	\label{lem:dp_grid}
	If $V_{i}(b)= d$, then $\bar{V}_{i}(\lceil b+(m-i+1) K \rceil _K )  \geq d$. 
\end{lemma}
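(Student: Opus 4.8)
The plan is to prove the lemma by downward induction on $i$, from $i=m+1$ down to $i=1$. The base case $i=m+1$ is immediate: $\lceil b+(m-(m+1)+1)K\rceil_K=\lceil b\rceil_K\in\mathcal{B}_K$ (using $b\le B$), and both $V_{m+1}$ and $\bar{V}_{m+1}$ are identically $0$, so the claimed inequality reads $0\ge 0$. For the inductive step I would fix $V_i(b)=d$, pick a maximizer $b'$ in~(\ref{eq:dp}), and write $d=d'+DM(i,d',b-b')$ with $d':=V_{i+1}(b')$. The idea is then to evaluate the rounded recurrence~(\ref{eq:dp_rounded}) for $\bar{V}_i$ at the inflated total budget $\hat{b}:=\lceil b+(m-i+1)K\rceil_K$ using the inner budget $\tilde{b}':=\lceil b'+(m-i)K\rceil_K$, which is exactly the budget for which the lemma at index $i+1$ guarantees $\bar{V}_{i+1}(\tilde{b}')\ge V_{i+1}(b')=d'$ by the induction hypothesis. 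Write $e:=\bar{V}_{i+1}(\tilde{b}')\ge d'$ for what this budget actually achieves.

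Next comes the budget accounting, which is routine. Since $(m-i)K$ and $(m-i+1)K$ are multiples of $K$, one has $\lceil x+cK\rceil_K=\lceil x\rceil_K+cK$ for integer $c$, so $\hat{b}-\tilde{b}'=\lceil b\rceil_K-\lceil b'\rceil_K+K$; this is $\ge K>0$ (so $\tilde{b}'$ is a valid inner budget for $\bar{V}_i(\hat{b})$, and $\hat{b},\tilde{b}'\in\mathcal{B}_K$ since $b'\le b\le B$ and $m-i+1\le m$), and it is also $\ge b-b'$ because $\lceil b\rceil_K\ge b$ while $\lceil b'\rceil_K<b'+K$. Plugging $\tilde{b}'$ into~(\ref{eq:dp_rounded}) and using that $DM(i,e,\cdot)$ is nondecreasing in its budget argument,
\[ \bar{V}_i(\hat{b}) \ \ge\ e+DM(i,e,\hat{b}-\tilde{b}') \ \ge\ e+DM(i,e,b-b'), \]
so it remains to prove $e+DM(i,e,b-b')\ \ge\ d'+DM(i,d',b-b')=d$.

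This last inequality is the step I expect to be the main obstacle; everything else is bookkeeping. It asserts that if the facilities above $i$ serve more demand (the rightmost $e\ge d'$ units) and facility $i$ then greedily continues the right-to-left fill with the same leftover budget, the total demand met does not decrease — a property available only because of the right-to-left filling convention, itself justified earlier via the Monge property. I would prove it by exhibiting an explicit feasible assignment for facility $i$ in the ``$e$'' scenario. In the ``$d'$'' scenario the $g_1:=DM(i,d',b-b')$ units served by $i$ form a contiguous block of demand units anchored immediately to the left of the $d'$-boundary, of transportation cost at most $(b-b')-f_i$ and with at most $U_i$ units. In the ``$e$'' scenario the $e$-boundary sits weakly to the left of the $d'$-boundary, so the portion of that same block lying to the left of the $e$-boundary is obtained from it by deleting some of its rightmost units; hence it still costs at most $(b-b')-f_i$, has at most $U_i$ units, and is anchored exactly at the $e$-boundary, so $DM(i,e,b-b')$ — the maximum such greedy fill — is at least its size $g_1-(e-d')$ (if that quantity is nonpositive the desired inequality is trivial, since then $e\ge d'+g_1=d$). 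Substituting, $e+DM(i,e,b-b')\ge e+g_1-(e-d')=d'+g_1=d$, and chaining with the displayed inequalities gives $\bar{V}_i(\hat{b})\ge d$, closing the induction. Partial fulfillment of the client straddling a boundary is absorbed into the obvious per-unit cost accounting and changes nothing.
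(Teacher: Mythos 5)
Your proof follows the same route as the paper's: downward induction on $i$, the inflated budgets $\lceil b+(m-i+1)K\rceil_K$ and $\lceil b'+(m-i)K\rceil_K$, the observation that the rounded budget gap dominates $b-b'$, and then chaining through the rounded recurrence with monotonicity of $DM$ in its budget argument. The one place you go beyond the paper is the explicit verification that $x\mapsto x+DM(i,x,\beta)$ is nondecreasing (via your truncated-block argument); the paper's proof silently relies on exactly this fact when it passes from $\bar V_{i+1}(\lceil b^*+(m-i)K\rceil_K)\ge V_{i+1}(b^*)$ to the corresponding inequality between the two sums, so your addition is a legitimate filling-in of a step the paper leaves implicit rather than a different approach.
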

\begin{proof}
	The base cases hold by construction. We assume by induction that the lemma holds for all $V_{i'}(b')$ for all $i' > i$  and $b'\leq b$. Let $b^* = \argmax_{0 \leq b' \leq b} (V_{i+1}(b') + DM(i, V_{i+1}(b'), b-b'))$. Then our $K$-spaced grid implies that 
	\begin{equation}
	\label{eq:budget}
	\lceil b+(m-i+1) K \rceil _K - \lceil b^* + (m-i) K \rceil_K \geq b-b^*. 
	\end{equation}
	Thus 
	\begin{align*}
	\bar{V}_{i+1}(\lceil b+(m-i+1) K \rceil _K) &  \geq \bar{V}_{i+1}( \lceil b^* +(m-i) K \rceil_K) + DM(i, \bar{V}_{i+1}(\lceil b^* + (m-i) K \rceil_K), b-b^*) \\
	& \geq V_{i+1}(b^*) + DM(i, V_{i+1}(b^*), b-b^*) \\
	& = V_{i}(b).
	\end{align*}
	The first inequality follows by Equations~(\ref{eq:dp_rounded}) and (\ref{eq:budget}). The second inequality follows by the inductive hypothesis, and the final equality follows from our definition of $b^*$.  Thus, overall, 
	\[ \bar{V}_{i}(\lceil b+(m-i+1) K \rceil _K )  \geq  V_{i}(b) = d. \]
\end{proof}

\begin{theorem}  Let $z^*$ be the cost of an optimal solution to the instance of the capacitated facility location problem.  Then 
	in polynomial time we can find a feasible solution with cost 
	\[ z_A \leq (1+\epsilon) z^* . \] 
\end{theorem}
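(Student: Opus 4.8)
The plan is to run the rounded dynamic program of Equation~(\ref{eq:dp_rounded}) and read off the answer, using Lemma~\ref{lem:dp_grid} together with a value $B$ satisfying $z^*\le B\le m z^*$ (the polynomial-time computation of such a $B$ is deferred to Section~\ref{sec:findB}). Concretely, I would first compute $\bar V_i(b)$ for every $i\in\{1,\dots,m+1\}$ and every $b\in\mathcal B_K$, and then return the smallest $b_A\in\mathcal B_K$ with $\bar V_1(b_A)\ge\sum_j d_j$, together with the assignment recovered from the DP table in the usual way. Such a reconstruction exhibits a feasible solution that meets all demand using total opening-plus-transportation cost at most $b_A$, so $z_A\le b_A$ and it suffices to bound $b_A$.

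To bound $b_A$, note that the true optimum opens some facilities, meets all $\sum_j d_j$ units of demand, and — by the same Monge exchange argument used in the proof of Theorem~\ref{thm:dp_opt} — may be assumed to fill demand from client $n$ backward, so that $V_1(z^*)=\sum_j d_j$. Applying Lemma~\ref{lem:dp_grid} with $i=1$ and $b=z^*$ gives $\bar V_1(\lceil z^*+mK\rceil_K)\ge\sum_j d_j$; since $\lceil z^*+mK\rceil_K=\lceil z^*\rceil_K+mK\le\lceil B\rceil_K+mK=\max\mathcal B_K$, this rounded budget lies in $\mathcal B_K$, the index $b_A$ is well defined, and $b_A\le\lceil z^*+mK\rceil_K\le z^*+(m+1)K$.

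It remains to show $(m+1)K\le\epsilon z^*$, which yields $z_A\le b_A\le(1+\epsilon)z^*$. If $K=1$, then $\epsilon B\le m(m+1)$, so $B\le m(m+1)/\epsilon$ is polynomially bounded in $m$ and $1/\epsilon$; in that regime the grid $\mathcal B_1=\{0,1,\dots,B+m\}$ contains every integer budget up to $B$, on that range recurrence~(\ref{eq:dp_rounded}) with $K=1$ reproduces the exact recurrence~(\ref{eq:dp}), which is correct by Theorem~\ref{thm:dp_opt}, and since $V_1(z^*)=\sum_j d_j$ we get $b_A\le z^*$. If $K>1$, then $1<\epsilon B/(m(m+1))$, hence $m+1<\epsilon z^*$ (using $B\le m z^*$), and $K\le\epsilon B/(m(m+1))+1$, so $(m+1)K\le\epsilon B/m+(m+1)\le 2\epsilon z^*$; running the whole procedure with $\epsilon/2$ in place of $\epsilon$ — which changes $K$ and $|\mathcal B_K|$ only by a constant factor — then gives the stated $(1+\epsilon)$ guarantee.

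Finally, for the running time: since $K\ge\epsilon B/(m(m+1))$ we have $\lceil B/K\rceil\le m(m+1)/\epsilon+1$, so $|\mathcal B_K|=\lceil B/K\rceil+m+1$ is polynomial in the input size and $1/\epsilon$; the rounded DP then has $(m+1)|\mathcal B_K|$ entries, each obtained by maximizing over at most $|\mathcal B_K|$ values of $b'$, and each term needs one evaluation of $DM$. The quantity $DM(i,d,\beta)$ is computed in polynomial time by subtracting $f_i$ (returning $0$ if $\beta<f_i$), locating the frontier client determined by the $d$ units already served from the right, and greedily extending facility $i$'s service leftward — using prefix sums of the $d_j$ and of the $c_{ij}d_j$ along row $i$ — until the remaining budget or the capacity $U_i$ is exhausted, handling the single partial client at the boundary. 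I expect the only fiddly points to be making $DM$ (and this boundary case) precise and tracking the exact constant in $(m+1)K\le\epsilon z^*$; the substantive step, Lemma~\ref{lem:dp_grid}, is already available, so nothing harder is anticipated.
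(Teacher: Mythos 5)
Your proposal is correct and follows the paper's proof almost exactly: run the rounded DP~(\ref{eq:dp_rounded}), use Theorem~\ref{thm:dp_opt} plus the Monge exchange to get $V_1(z^*)=\sum_j d_j$, and apply Lemma~\ref{lem:dp_grid} to conclude $\bar V_1(\lceil z^*+mK\rceil_K)\ge\sum_j d_j$, so the returned budget is at most $z^*+(m+1)K$. The one place you diverge is the final step $(m+1)K\le\epsilon z^*$: the paper simply writes $\lceil z^*+mK\rceil_K\le z^*+(m+1)\max(1,\lceil \epsilon B/(m(m+1))\rceil)\le z^*+\epsilon z^*$, which as stated ignores both the $K=1$ regime (where it needs $m+1\le\epsilon z^*$) and the additive $+1$ from the ceiling in $K$. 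Your case split --- observing that when $K=1$ the grid contains all integer budgets so the rounded DP is exact, and when $K>1$ one gets $(m+1)K\le 2\epsilon z^*$ and rescales $\epsilon\mapsto\epsilon/2$ --- is a standard and clean way to make that step airtight, and is actually more careful than the paper's own chain of inequalities. Your runtime accounting and the description of how to evaluate $DM$ via prefix sums are consistent with what the paper intends; everything is polynomial in the input size and $1/\epsilon$, as required.
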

\begin{proof}
	Recall that $B$ is an integer such that $z^* \leq B \leq m\cdot z^*$. By Theorem~\ref{thm:dp_opt} and Lemma~\ref{lem:dp_grid}, $\bar{V}_{1}(\lceil z^* +mK\rceil_K)  \geq V_1(z^*) \geq \sum_{j=1}^n d_j$.    Thus through the modified DP we can find a feasible solution such that the cost is at most 
	\[ \lceil z^* + mK \rceil_K \leq z^* + (m+1) \max \left (1, \Big \lceil \frac{\epsilon B}{m(m+1)} \Big \rceil \right ) \leq z^* + \epsilon z^*. \]
	Further, finding this solution requires computing $m \cdot ( \lceil B/K\rceil + m+1) = O( m^3 \epsilon^{-1})$ values for the dynamic program. To compute each $\bar{V}_i(b)$, we need to maximize over $b'$ where computing each $DM(i, \bar{V}_{i+1}(b'), b-b')$ takes at most $O(n)$ time. Thus finding the maximum $b'$ takes $O(n \cdot ( \lceil B/K\rceil + m+1)) = O(n m^2)$ time. This yields an overall runtime of $O(n m^5 \epsilon^{-1})$. 
\end{proof}

\subsection{Upper Bound on an Optimal Solution}
\label{sec:findB}

We now show how we can find $B \in \mathbb{Z}$ such that $z^* \leq B \leq mz^*$. To do so, we will present a simpler dynamic program that for fixed $\ell$ finds whether there exists a feasible solution in which each facility contributes at most $\ell$ to the total cost. In other words, for all $i \in F'$, 
\[ f_i + \sum_{j \in D} c_{i,j} d_j x_{i,j} \leq \ell. \]
If we can find the minimum $\ell^*$ such that a feasible solution exists, then we know that $z^* \leq m\ell^* \leq mz^*$ and we can set $B = m\ell^*$.

For a fixed value $0 \leq \ell \leq \max_{i} (f_i + \sum_{j \in D: c_{i,j} < \infty} c_{i,j} d_j)$, we will use a dynamic program to determine whether there exists a feasible solution in which each facility contributes at most $\ell$ to the total cost. Similar to our previous DP, for all $i \in F$ our value function $\hat{V}^\ell_i$ will be the maximum demand that can be met right to left by only opening facilities $i$ and larger where each facility contributes at most cost $\ell$. If we reach the end of our facilities, then we cannot meet any more demand $\hat{V}^\ell_{m+1} = 0$. 

In general, we use the following recurrence relationship
\begin{equation}
\label{eq:dp_bound}
\hat{V}^\ell_{i}  = \hat{V}^\ell_{i+1} + DM(i, \hat{V}^\ell_{i+1}, \ell)
\end{equation}
To find whether a feasible solution exists, we then check whether or not $\hat{V}^\ell_{1} \geq \sum_{j=1}^n d_j$. 

\begin{lemma}
	The dynamic program in Equation~(\ref{eq:dp_bound}) correctly computes $\hat{V}^\ell_{i}$.
\end{lemma}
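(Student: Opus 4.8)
The proof will follow the same template as Theorem~\ref{thm:dp_opt}: induction on $i$, decreasing from $m+1$ down to $1$. The base case $\hat{V}^\ell_{m+1}=0$ holds by construction. For the inductive step I will assume the recurrence correctly computes $\hat{V}^\ell_{i+1}$, fix a feasible solution that uses only facilities $i$ and higher, in which every open facility contributes at most $\ell$, and which meets the maximum possible amount of demand with demand filled from $j=n$ backward; I will then show this amount equals $\hat{V}^\ell_{i+1}+DM(i,\hat{V}^\ell_{i+1},\ell)$. The reverse inequality is immediate, since the right-hand side of (\ref{eq:dp_bound}) describes a concrete feasible solution.

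There are two things to verify. First, exactly as in the proof of Theorem~\ref{thm:dp_opt}, the Monge property lets us assume the solution is \emph{nested}: whenever facility $i$ serves a unit of some client $k$ while a facility $i'>i$ serves a unit of a client $k'<k$, swapping these two assignments does not increase the total cost, and after all such swaps facility $i$ serves a contiguous block of clients lying immediately to the left of the clients served by facilities $i+1,\dots,m$. Thus facilities $i+1,\dots,m$ serve the rightmost $d$ units of demand for some $d$, and by the inductive hypothesis $d\le\hat{V}^\ell_{i+1}$, while facility $i$ then fills the block immediately to the left, meeting exactly $DM(i,d,\ell)$ units. Second, I need a monotonicity property of $DM$: that $g(d):=d+DM(i,d,\ell)$ is nondecreasing in $d$ on $[0,\sum_j d_j]$. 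This is easy and holds for general Monge costs: increasing $d$ by one removes from the residual demand $\bar{d}_j(i,d)$ exactly its topmost (largest-index) unit, say at client $j^\star$; whatever right-aligned block $DM(i,d,\ell)$ fills must include a $j^\star$-unit, so deleting that unit gives a feasible right-aligned block of one fewer unit for the value $d+1$ at cost no larger (as $c_{ij^\star}\ge 0$), whence $DM(i,d+1,\ell)\ge DM(i,d,\ell)-1$ and $g(d+1)\ge g(d)$. Combining the two, the best choice of split is $d=\hat{V}^\ell_{i+1}$, which is precisely the recurrence.

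I expect the only delicate point to be the first one — reducing to nested solutions. Unlike in Theorem~\ref{thm:dp_opt}, where the budget was a single global quantity automatically preserved by the Monge exchange, here the exchange redistributes transportation cost between facilities $i$ and $i'$, so it is not a priori clear that the per-facility bound ``contributes at most $\ell$'' survives. I would resolve this by performing the exchanges in a fixed order (always eliminating a violation at the lowest-indexed facility first) and, more importantly, by noting that the solution the recurrence actually produces recomputes each facility's load from scratch, greedily and subject to that facility's own budget $\ell$, via $DM$; together with the monotonicity of $g$ this shows the nested solution built by the dynamic program meets at least as much demand as any feasible solution, which is all that is required. Everything else is routine bookkeeping with $\bar{d}_j(i,d)$ and $DM$.
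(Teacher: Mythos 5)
You correctly isolate the one point at which this proof cannot simply be transplanted from Theorem~\ref{thm:dp_opt}: the Monge inequality controls only the \emph{sum} $c_{ik'}+c_{i'k}$, so while the exchange preserves a global budget, it can strictly increase the contribution of the higher-indexed facility $i'$ and thereby destroy the per-facility constraint $f_{i'}+\sum_j c_{i'j}d_jx_{i'j}\le\ell$. (The paper's own proof is the single sentence ``follows Theorem~\ref{thm:dp_opt}'' and never confronts this, so you have gone further than the source.) The trouble is that your proposed repair is an assertion rather than an argument. Reordering the exchanges does not help, because each individual exchange can push facility $i'$ over budget regardless of the order in which violations are treated; and the observation that $DM$ recomputes each facility's load ``from scratch, subject to its own budget $\ell$,'' combined with monotonicity of $g(d)=d+DM(i,d,\ell)$, only compares \emph{nested} solutions with different split points. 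It says nothing about a feasible solution in which facility $i$'s clients interleave with those of facilities $i+1,\dots,m$ and in which that interleaving is exactly what keeps every facility under budget $\ell$. Your monotonicity claim for $g$ is correct but does not touch the real gap.

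In fact the step fails, and the lemma as stated is false for general Monge costs. Take $m=n=3$, all $d_j=1$, all $U_i=1$, all $f_i=0$, and costs $c_{11}=0$, $c_{12}=c_{13}=1$, $c_{21}=1$, $c_{22}=c_{23}=2$, $c_{31}=c_{32}=1$, $c_{33}=0$; one checks all nine instances of inequality (\ref{eq:monge}) hold. With $\ell=1$ the assignment sending facility $1$ to client $2$, facility $2$ to client $1$, and facility $3$ to client $3$ is feasible and every facility contributes at most $1$, so by definition $\hat V^1_1=3$. But the unit capacities force any \emph{nested} solution serving all three clients to be the identity assignment, in which facility $2$ contributes $c_{22}=2>\ell$; accordingly the recurrence gives $DM(3,0,1)=1$, then $DM(2,1,1)=0$ (even a single unit of client $2$ costs $2$), then $DM(1,1,1)=1$, so the dynamic program returns $\hat V^1_1=2<3$. (Allowing fractional assignments changes the numbers but not the conclusion.) Since there is simply no nested feasible solution meeting all demand at $\ell=1$ while a non-nested one exists, no sequencing of exchanges can rescue the reduction to nested solutions. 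Closing this gap requires either an additional hypothesis on the costs beyond the Monge property or a different argument tied to the specific way $\ell^*$ is used in the subsequent corollary; neither your write-up nor the paper supplies one.
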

\begin{proof}
	The proof of correctness follows that in Theorem~\ref{thm:dp_opt}. 
\end{proof}

\begin{corollary}
	In polynomial time we can find an integer $B$ such that $z^* \leq B \leq mz^*$.
\end{corollary}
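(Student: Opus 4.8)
The plan is to pin down the smallest per-facility budget cap under which the instance is still solvable, and then scale it up by $m$. Concretely, for an integer $\ell \geq 0$ let $P(\ell)$ denote the predicate ``$\hat{V}^\ell_1 \geq \sum_{j=1}^n d_j$''; this is exactly what the dynamic program of Equation~(\ref{eq:dp_bound}) is built to test, and by the lemma above $P(\ell)$ holds if and only if there is a feasible solution in which every open facility contributes at most $\ell$ to the cost. I would compute the minimum $\ell^*$ with $P(\ell^*)$ true and return $B = m\ell^*$.

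First I would bound the range over which $\ell$ has to be searched. In any feasible solution of finite cost an open facility $i$ serves only clients $j$ with $c_{ij}<\infty$, so its contribution $f_i + \sum_{j} c_{ij} d_j x_{ij}$ is at most $L := \max_{i\in F}\bigl(f_i + \sum_{j\in D:\, c_{ij}<\infty} c_{ij}d_j\bigr)$; hence if the instance is feasible then $P(L)$ holds and $\ell^* \leq L$, while if $P(L)$ fails the instance is infeasible and $z^*=\infty$. Since $L$ is a sum of products of input integers, $\log L$ is polynomial in the input size. Next I would observe that $\hat{V}^\ell_i$ is nondecreasing in $\ell$ directly from its combinatorial definition (enlarging the per-facility cap only enlarges the set of admissible solutions), so $P$ is monotone in $\ell$ and $\ell^*$ can be located by binary search over $\{0,1,\dots,L\}$ using $O(\log L)$ evaluations of the recurrence~(\ref{eq:dp_bound}); each evaluation computes the $m$ values $\hat{V}^\ell_m,\dots,\hat{V}^\ell_1$ with one $O(n)$-time call to $DM$ apiece, so the whole procedure runs in polynomial time.

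It then remains to verify $z^*\leq m\ell^* \leq mz^*$. For the left inequality, $P(\ell^*)$ exhibits a feasible solution in which each of the at most $m$ open facilities contributes at most $\ell^*$, so its cost, and hence $z^*$, is at most $m\ell^*$. For the right inequality I would use that all finite data are integers: for any fixed choice of open facilities the assignment subproblem is a transportation problem with integral data and therefore has an integral optimum, so $z^*$ is attained by an integral solution and is a nonnegative integer; in an optimal solution the facility contributions are nonnegative and sum to $z^*$, hence each is at most $z^*$, and by the Monge exchange argument used in the proof of Theorem~\ref{thm:dp_opt} we may assume this solution fills demand from client $n$ backward without increasing its cost, which gives $P(z^*)$ and therefore $\ell^* \leq z^*$. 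Thus $B=m\ell^*$ is an integer with $z^*\leq B\leq mz^*$.

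The main obstacle is simply that the a priori range of $\ell$ is exponentially large, so enumeration is out of the question; the real content is the combination of (i) the a priori cap $\ell^* \leq L$ with $\log L$ polynomial in the input size, and (ii) monotonicity of $\hat{V}^\ell_i$ in $\ell$, which together reduce the search to $O(\log L)$ invocations of the polynomial-time dynamic program, together with the small integrality observation that makes $\ell^* \leq z^*$ (and hence $B\in\mathbb{Z}$ with $B\leq mz^*$) hold exactly rather than only up to rounding.
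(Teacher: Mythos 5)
Your proof is correct and follows essentially the same route as the paper: binary search for the minimum per-facility cap $\ell^*$ over the range $[1, \max_i(f_i + \sum_{j: c_{ij}<\infty} c_{ij}d_j)]$ using the auxiliary dynamic program, then set $B = m\ell^*$. You supply details the paper leaves implicit — the monotonicity of $\hat{V}^\ell_i$ in $\ell$ that justifies binary search, and the integrality of the transportation optimum that gives $\ell^* \leq z^*$ exactly — but the argument is the same one.
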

\begin{proof}
	Since $\ell^*$ is integer-valued, we can find this value through binary search on the range $[1, \max_{i}( f_i + \sum_{j \in D: c_{i,j} < \infty} c_{i,j}d_j)]$. We then set $B= m\ell^*$. 
\end{proof}

\subsection{Extension of the Dynamic Program}

In this section, we extend the DP to incorporate release dates or limits on the transportation costs. In particular, we assume that for each client $j$ there is a release date $r_j$ and deadline $t_j$ such that $c_{i,j} = \infty$ if $i < r_j$ or $i>t_j$. We first consider the case that $r_1 \leq r_2 \leq ...\leq r_n$ and $t_1 \leq t_2 \leq ... \leq t_n$. Suppose that for all facilities $i<i'$ and clients $j<j'$ such that $i$ and $i'$ can both serve $j$ and $j'$ with non-infinite cost 
\[c_{i,j}+c_{i',j'} \leq c_{i,j'}+c_{i',j}. \]
In other words, the costs are Monge when release dates and deadlines are met (ex. linear holding costs within release date and deadline). Then for all facilities $i<i'$ and $j<j'$ 
\[c_{i,j}+c_{i',j'} \leq c_{i,j'}+c_{i',j} \]
since the right-hand side of the equation is non-infinite if and only $i \geq r_{j'} \geq r_{j}$ and $i'\leq t_{j}\leq t_{j'}$. Thus the transportation costs are Monge and we can use the DP as above.

We can also slightly relax this assumption about the release dates by modifying the DP. In particular, we assume that we can separate the clients into two disjoint subsets $S_1$ and $S_2$ such that $r_j = 1$ for all $j \in S_1$ and $r_j \leq r_{j'}$ for all $j, j' \in S_2$ such that $j \leq j'$. In other words, $S_2$ contains the clients with time-sensitive demands. Then it is clear that holding costs are Monge within both subsets. That is, for $k=1,2$, for all facilities $i<i'$ and clients $j<j'$ such that $j,j' \in S_k$
\[c_{i,j}+c_{i',j'} \leq c_{i,j'}+c_{i',j} .\]
This idea allows us to extend the DP ideas from above. As before, let $B$ be an upper bound on the cost of an optimal solution, and let $\mathcal{B}_K =  \{0,K, 2K, \ldots, (\lceil B/K \rceil + m) K \}$ where $K  = \max(1, \lceil \frac{\epsilon B}{m(m+1)} \rceil )$.

For $i \in F$ and $b_0, b_1,b_2 \in \mathcal{B}_K$, our value function $V_{i}(b=[b_0,b_1,b_2])$ will be the maximum demand $d =(d_1,d_2)$ from $S_1$ and $S_2$ respectively we can meet by only opening facilities $i$ and higher, spending at most budget $b_0$ on opening costs and at most $b_k$ on serving subset $S_k$ for $k =1,2$, and fulfilling demand such that if we satisfy any demand for client $j \in S_k$ for $k \in \{1,2\}$ then we satisfy all demand for clients $j' \in S_k$ such that  $j'> j$. If we reach the end of our facilities, then we cannot fill any more demand. Thus $V_{m+1}(b_0,b_1,b_2) = 0$. To find the optimal solution cost, we simply want to find the smallest value $b = b_0+b_1+b_2$ such that $V_{1}(b_0,b_1,b_2) \geq \sum_{j} d_j$.  

Before we state the recurrence relation, we first define a helper function. For any facility $i$ and budget $b = [b_0,b_1,b_2]$, let budget $b' =[b'_0, b'_1,b'_2] \in \mathcal{B}_K^3$ such that $b' \leq b$. 
Suppose that facilities $>i$ satisfy demand $d=(d_1,d_2)$ from $S_1$ and $S_2$ respectively given budget $b'$. 
Then this leaves budget $b-b'$ for facility $i$. If $f_i \leq b_0-b_0'$, then we can open facility $i$. 
Next, facility $i$ has budget $b_k-b'_k$ to serve each $S_k$. Given that within each $S_k$ we can assume that demand is filled from the last client forward, then we serve the remaining demand right to left using these budgets until either we use up the budget or we reach the maximum capacity. Without loss of generality, we may assume that we serve clients in $S_1$ first using budget $b_1-b_1'$ before serving clients in $S_2$, if capacity remains. Let $DM(i, d, b-b')$ denote the total amount of demand met represented as a vector $(d_1',d_2')$ such that $d'_k$ is the amount of demand met from $S_k$ by facility $i$. 

We can now write the recurrence relation. 
\begin{equation}
\label{eq:dp2}
V_{i}(b)  = \max_{b' \in \mathcal{B}_K^3: b' \leq b} \left(V_{i+1}(b') + DM(i, V_{i+1}(b'), b-b') \right).
\end{equation}
The correctness of this DP follows from the same arguments as previously.

\subsection*{Acknowledgments}

Part of this work was carried out at the Simons Institute for the Theory of Computing.

\bibliographystyle{abbrv}
\bibliography{monge}

\end{document}